\newtheorem{lemma}{Lemma}
\crefname{section}{§}{§§}
\newlength{\figurewidth}
\newlength{\smallfigurewidth}
\newcolumntype{?}{!{\vrule width 1pt}}
\newcommand{\lcell}{\raggedright\arraybackslash}
\newcommand{\ccell}{\centering\arraybackslash}
\newcommand{\rcell}{\raggedleft\arraybackslash}
\newcommand{\bftab}{\fontseries{b}\selectfont}
\begin{document}

\title
{\large
\textbf{Selective Run-Length Encoding}
}

\author{%
Xutan Peng, Yi Zhang, Dejia Peng, and Jiafa Zhu\\[0.5em]
{\small\begin{minipage}{\linewidth}\begin{center}
Huawei Galois Lab\\
\url{{pengxutan, zhangyi418, pengdejia, zhujiafa}@huawei.com} 
\end{center}\end{minipage}}
\thanks{We would like to express our sincerest gratitude to Liwei Guo, Zhaoyi Sun, Fang Li, Jie Liu, Guanyi Chen, Yi Xin, Ruizhe Li, and Shuwei Qian for their
insightful and helpful comments.}}

\maketitle
\thispagestyle{empty}

\begin{abstract}
Run-Length Encoding (RLE) is one of the most fundamental tools in data compression. However, its compression power drops significantly if there lacks consecutive elements in the sequence. In extreme cases, the output of the encoder may require more space than the input (aka \ul{\textit{size inflation}}). To alleviate this issue, using combinatorics, we quantify RLE’s space savings for a given input distribution. With this insight, we develop the first algorithm that \textit{automatically} identifies suitable symbols, then selectively encodes these symbols with RLE while directly storing the others without RLE. 
Through experiments on real-world datasets of various modalities, we empirically validate that our method, which maintains RLE's efficiency advantage, can effectively mitigate the size inflation dilemma.
\end{abstract}

\section{Introduction}\label{sec:intro}
Run-Length Encoding (RLE) is a simple yet powerful data compression technique that can be used to losslessly encode \textit{runs}, i.e., sequences of repeated symbols, in a compact form~\cite{Golomb1966RunlengthE}. The basic idea behind RLE is to represent each element only once, followed by its count within the run. For instance, \texttt{<\textcolor{ForestGreen}{a}, \textcolor{Blue}{b}, \textcolor{Blue}{b}, \textcolor{Blue}{b}, \textcolor{ForestGreen}{a}, \textcolor{ForestGreen}{a}, \textcolor{red}{c}, \textcolor{red}{c}, ...>} can be stored as \texttt{<\textcolor{ForestGreen}{a}, \textcolor{Blue}{b}, \textcolor{ForestGreen}{a}, \textcolor{red}{c}, ...>} (namely an \textit{encoded variable}) together with \texttt{<\textcolor{ForestGreen}{1}, \textcolor{Blue}{3}, \textcolor{ForestGreen}{2}, \textcolor{red}{2}, ...>} (namely a \textit{run-control variable}). Note that the encoded variable and the run-control variable are equal in length. Binary code for an element in the run-control variable is usually implemented with a fixed width (denoted as $b_r$) and can thus represent a run length $\in [1, 2^{b_r}]$. Should the length (denoted as $n$) exceed $2^{b_r}$, this run will be separated into $\lceil \frac{n}{2^{b_r}}\rceil$ divisions by RLE to avoid overflow~\cite{KLEIN2014326,jang2023zero}.

RLE can significantly reduce the size of the encoded data, especially when dealing with repetitive patterns or long sequences. In addition, it is easy to implement and incurs a low overhead (with the time complexity being $\mathcal{O}(N)$ during both encoding and decoding). As a result, RLE becomes one of the most important compressors in applications where speed and simplicity are key factors: Digital Archive~\cite{rle-preprocess-dcc2021}, Time-Series Database~\cite{sprintz}, and Image Codec (e.g., BMP and JPEG~\cite{images}), to list a few.

Nevertheless, the \textit{vanilla} RLE has its Achilles' heel: if long runs are missing in the input sequence, the output may occupy even more space~\cite{Sayood1996IntroductionTD}. One straightforward solution is to perform two RLE passes. The first pass is \textit{exploratory} and involves all symbols. It identifies symbols \textit{suitable} to RLE, i.e., symbols whose total size in the output (encoded variable and run-control variable) is no larger than the input, in a post-hoc fashion. The second pass, which acts as the actual compression, only encodes symbols discovered in the first pass (i.e. suitable to RLE) and leaves other symbols unaffected. The major downside of this method is the computation cost. In particular, when there are various binary representation options (e.g., \cref{ssec:exp-setup}), finding the best configuration, undesirably,  requires multiple exploratory RLE passes. 

Another popular workaround tackles this efficiency drawback via a heuristic: symbols with higher frequencies are more likely to be suitable to RLE. Therefore, it obtains the input distribution (with at most one pass) and only handles frequent symbol(s) with RLE~\cite{gandolph2010method,rle-preprocess-dcc2021,jang2023zero}. However, determining the \textit{frequency threshold} entirely depends on intuitions and human experience, which is not robust and often leads to size inflation. 
On one hand, if this threshold is too high and some common symbols are not encoded with RLE, the compression effect can be worse than the vanilla RLE. On the other hand, if RLE handles some relatively rare symbols, the compression performance may again get unideal, e.g., in the worst-case scenarios where all symbols have low frequency, even the most dominant symbol may not be suitable to RLE.

In this paper, we theoretically show that, for an arbitrary symbol, its suitability to RLE can be elegantly determined by joining its frequency and bit width (see Eq.~\eqref{eq:final}; for the sake of brevity, we assume in our derivation that the input symbols are Independently and Identically Distributed (i.i.d.)).
Based on this insight, we propose a novel method that \textit{automatically} calculates the aforementioned frequency threshold, and only the symbols that meet this threshold are encoded with RLE. To the best of our knowledge, no one before us has made such an exploration. Next, on both i.i.d. and non-i.i.d. testbeds that cover different real-world modalities (tabular data, time-series, and image), we empirically validate that our algorithm, which retains the efficiency advantage, substantially outperforms existing RLE baselines and consistently avoid the size inflation issue. 


\section{Preliminaries}\label{sec:math}

\paragraph{Problem formulation.} 
Let $N$ be the length of the input sequence and $N_x$ be the amount of elements whose symbol is $x$. To decide whether symbol $x$ is suitable to RLE, we need to predict whether these $N_x$ elements require no extra bits in the RLE output compared with those in the input. 

To begin with, let $N_x - \mathbb{R}_x$ denote symbol $x$'s number of occurrences in the encoded variable. As the amount of the corresponding elements in the run-control variable is also $N_x - \mathbb{R}_x$ (see \cref{sec:intro}), encoding $x$ with RLE saves
\begin{equation}\label{eq:original_condition}
    b_x N_x - (b_x + b_r) (N_x - \mathbb{R}_x)  
\end{equation}
bits in total, where $b_x$ stands for the number of bits used to represent the symbol $x$. It is clear that to decide whether $\text{Eq.~\eqref{eq:original_condition}} \geq 0$ holds, the key task is to compute $\mathbb{R}_x$.

\begin{figure}
    \centering
    \includegraphics[width=0.7\textwidth]{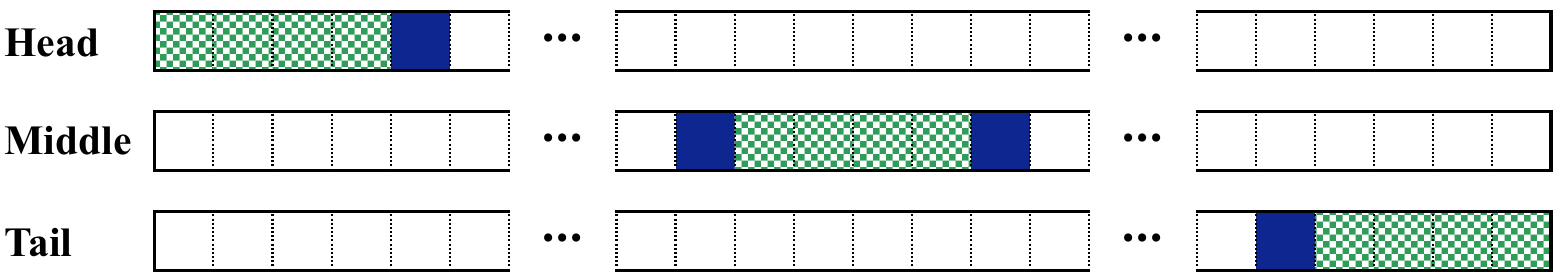}
    \caption{Three possible locations of a run where all symbols are $x$ and the length is precisely $n\leq N-2$ (in this example $n=4$). A \textcolor{ForestGreen}{green} (dotted hatch) cell stands for a symbol $x$, while \textcolor{Blue}{blue} (solid hatch) cells denote other symbols.}
    \label{fig:n-run-prop}
\end{figure}

\paragraph{Calculating $\mathbb{R}_x$.}
Following previous data compression works~\cite{iid-1995,iid-dcc2011,iid-2023}, we assume that the input sequence is generated from an i.i.d. source. When $N$ is sufficiently large, the probability that symbol $x$ occurs can be written as $p_x = \frac{N_x}{N}$. We can thus estimate the count of runs whose symbols are all $x$ and length is \textit{exactly} $n$:

\begin{itemize}[noitemsep, topsep=0pt, partopsep=0pt, label={$\bullet$}, leftmargin=10pt]
    \item When $n \leq N-2$, such a run may appear at three different types of locations in the input sequence (see Fig.~\ref{fig:n-run-prop}). In the ``middle'' case, such a run has $N-n-1$ possible starting indexes. For each index, its occurrence probability is $\textcolor{Blue}{(1-p_x)}\textcolor{ForestGreen}{p_x ^n}\textcolor{Blue}{(1-p_x)}$. Hence, the expected number of occurrence is $\textcolor{Blue}{(1-p_x)}\textcolor{ForestGreen}{p_x ^n}\textcolor{Blue}{(1-p_x)}(N-n-1)$. Similarly, the expected number of such a run's occurrence is $\textcolor{ForestGreen}{p_x ^n}\textcolor{Blue}{(1-p_x)}$ in the ``head'' case and $\textcolor{Blue}{(1-p_x)}\textcolor{ForestGreen}{p_x ^n}$ in the ``tail'' case.
    \item When $n=N-1$, such a run may only appear at the sequence's ``head'' or ``tail''.
    \item When $n=N$, trivially the number of occurrences is $p_x^N$. 
\end{itemize}  
It is known that encoding a run with a length of exactly $n$ requires $\lceil \frac{n}{2^{b_r}}\rceil$ run-control variable elements (see \cref{sec:intro}). Accordingly, we can show that
\begin{align}
    \mathbb{R}_{x}
    & = 
    \begingroup\color{lightgray}\underbrace{\color{black}\sum_{n=1}^{N-2} \big( \begingroup\color{lightgray}\overbrace{\color{black}(1 - p_x) p_x^n (1-p_x)(N-1-n)}^{\text{``middle''}}\endgroup + \begingroup\color{lightgray}\overbrace{\color{black}p_x^n (1-p_x)}^{\text{``head''}}\endgroup + \begingroup\color{lightgray}\overbrace{\color{black}(1-p_x)p_x^n}^{\text{``tail''}}\endgroup \big) \big( n-\lceil \frac{n}{2^{b_r}} \rceil \big)}_{n \leq N-2}\endgroup\nonumber \\
    & \hspace{5em} + \begingroup\color{lightgray}\underbrace{\color{black}\big( p_x^{N-1} (1-p_x) + (1-p_x)p_x^{N-1} \big) \big( N-1-\lceil \frac{N-1}{2^{b_r}} \rceil \big)}_{n=N-1 \text{ (only ``head'' and ``tail'')}}\endgroup + \begingroup\color{lightgray}\underbrace{\color{black}p_x^N \big( N-\lceil \frac{N}{2^{b_r}}\rceil \big)}_{n=N}\endgroup \nonumber \\
    & = p_x^N \big( N-\lceil \frac{N}{2^{b_r}}\rceil \big) + 
\sum_{n=1}^{N-1} \big( 2(1-p_x)p_x^n +  (1 - p_x)^2 p_x^n(N-1-n)\big) \big( n-\lceil \frac{n}{2^{b_r}} \rceil \big) \nonumber
\end{align}
For clarity, we rewrite this equation as
\begin{equation}\label{eq:real_sum}
    \mathbb{R}_x  = \sum_{n=1}^{N-1} (k_\alpha n + k_\beta)p_x ^n(n-1) + \epsilon_1 + \epsilon_2 
\end{equation}
where $k_\alpha  = -( 1-p_x )^2$, $k_\beta =\big( 1-p_x \big) \big( (1-p_x)(N-1)+2 \big)$, $\epsilon_1 =p_x^N\big( N-\lceil \frac{N}{2^{b_r}}\rceil \big)$, and $\epsilon_2  = \sum_{n=1}^{N-1} \big( k_\alpha n + k_\beta \big) p_x ^n \big( 1 - \lceil \frac{n}{2^{b_r}}\rceil \big)$.

\colorbox{gray!20}{If $p_x=1$, i.e., all input symbols are $x$}, when $N \gg 2^{b_r}$ (which holds in practice), via Eq.~\eqref{eq:real_sum} trivially we have
\begin{equation}
    \mathbb{R}_x = N - \lceil \frac{N}{2^{b_r}} \rceil = (1 - \frac{1}{2^{b_r}}) N \nonumber
\end{equation}
In this case, $x$ is suitable to RLE as long as
\begin{equation}\label{eq:condition_all}
    \text{Eq.~\eqref{eq:original_condition}} \geq 0 \iff b_x N \geq   \big( b_x + b_r \big) \big( N - (1 - \frac{1}{2^{b_r}}) N \big) \iff b_x \geq \frac{b_r}{2^{b_r} - 1}
\end{equation}
In practice, $b_r \geq 1$, so $1 \geq \frac{b_r}{2^{b_r} - 1}$; also $b_x \geq 1$, thus Eq.~\eqref{eq:condition_all} constantly holds. As a result, \ul{if all input elements are the same, RLE is always a suitable compressor}.

\colorbox{gray!20}{Otherwise, i.e., $p_x \in (0,1)$},
although we can still exploit Eq.~\eqref{eq:real_sum} directly to calculate $\mathbb{R}_x$, the computation overhead is too much. Concretely speaking, via Stirling's formula~\cite{Stirling} we know that the time complexity of $\sum_{n=1}^{N-1} (\cdot)^n$ is at least $\mathcal{O}(N \log N)$. 

\begin{figure}
    \centering
    \includegraphics[width=\textwidth]{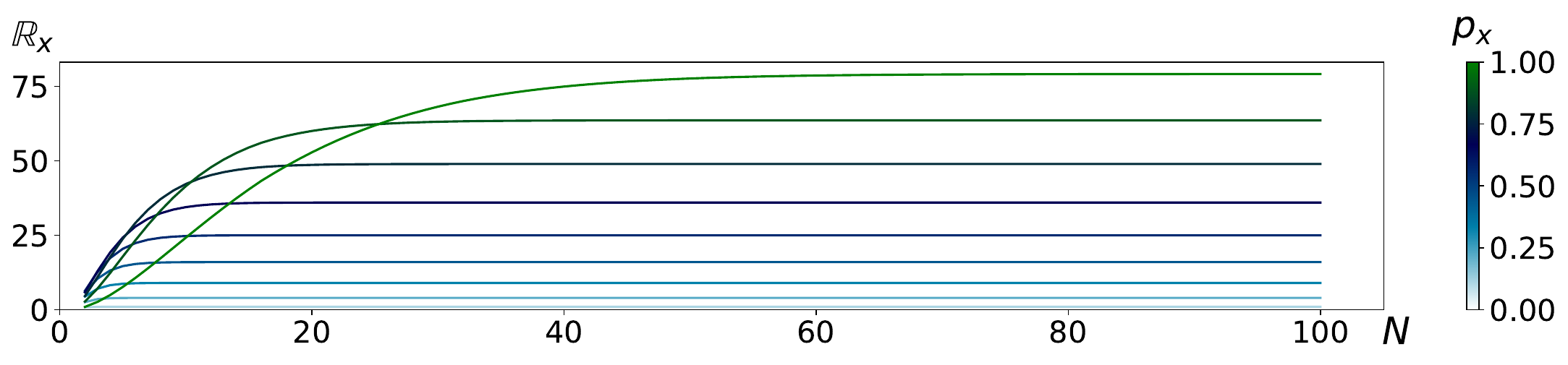}
    \caption{The relationship between $\mathbb{R}_x$ and $N$, with $b_r$ fixed at 4.}
    \label{fig:convergence}
\end{figure}
In Fig.~\ref{fig:convergence}, we notice that when $N$ surpasses 80, $\mathbb{R}_x$ displays a pronounced tendency towards convergence. In practice, $N$ is typically orders of magnitude larger than 80. Therefore, we can simplify the calculation of $\mathbb{R}_x$ by solving its limit. To be detailed, we first transform the summation of finite series to that of finite series as
\begin{align}
\mathbb{R}_x  & = \sum_{n=1}^{\infty} (k_\alpha n + k_\beta)p_x ^n(n-1) + \epsilon_1 + \epsilon_2 + \epsilon_3\nonumber \\
& = k_\alpha \sum_{n=1}^{\infty} p_x^n n^2 + (k_\beta - k_\alpha) \sum_{n=1}^{\infty} p_x^n  n - k_\beta \sum_{n=1}^{\infty} p_x^n + \epsilon_1 + \epsilon_2 + \epsilon_3 \nonumber
\end{align}
where $
\epsilon_3 = - \sum_{n=N}^{\infty} (k_\alpha n + k_\beta)p_x ^n(n-1)
$.
Then, using the three lemmas justified in the Appendix,
we can show that
\begin{align}
    \mathbb{R}_x  
& = k_\alpha \big( \frac{p_x^2 + p_x}{(1-p_x)^3} - 0\big)+ \big( k_\beta - k_\alpha \big) \big( \frac{p_x}{(1-p_x)^2} - 0 \big)- k_\beta \big( \frac{1}{1-p_x} - 1\big) + \epsilon_1 + \epsilon_2 + \epsilon_3 \nonumber\\
& =  p_x^2 (N-1) + \epsilon_1 + \epsilon_2 + \epsilon_3 \nonumber
\end{align}

\paragraph{Approximation analysis.} We show that $\epsilon_1$ can be neglected when $N$ is sufficiently large and $p_x \in (0, 1)$. Given $f(x) = \frac{\ln x}{x}$ is monotonically increasing, we have
\begin{equation}
        \frac{\ln(p_xN)}{p_xN} < \frac{\ln(N)}{N}  \iff N ln (p_x N) < (p_x N)lnN  \iff (p_x N)^{N} < N^{p_x N} \nonumber 
\end{equation}
Because $N - p_x N  = N - N_x \geq 1$,
\begin{equation}
    \epsilon_1 = p_x^N\big( N-\lceil \frac{N}{2^{b_r}} \rceil \big) < p_x^N N = \frac{(p_x N)^N}{N^{p_x N}} \cdot \frac{N}{N^{N- p_x N}} < \frac{N}{N^{N- p_x N}} \leq1 \ll N \nonumber
\end{equation}

\begin{figure}
    \centering
    \includegraphics[width=\textwidth]{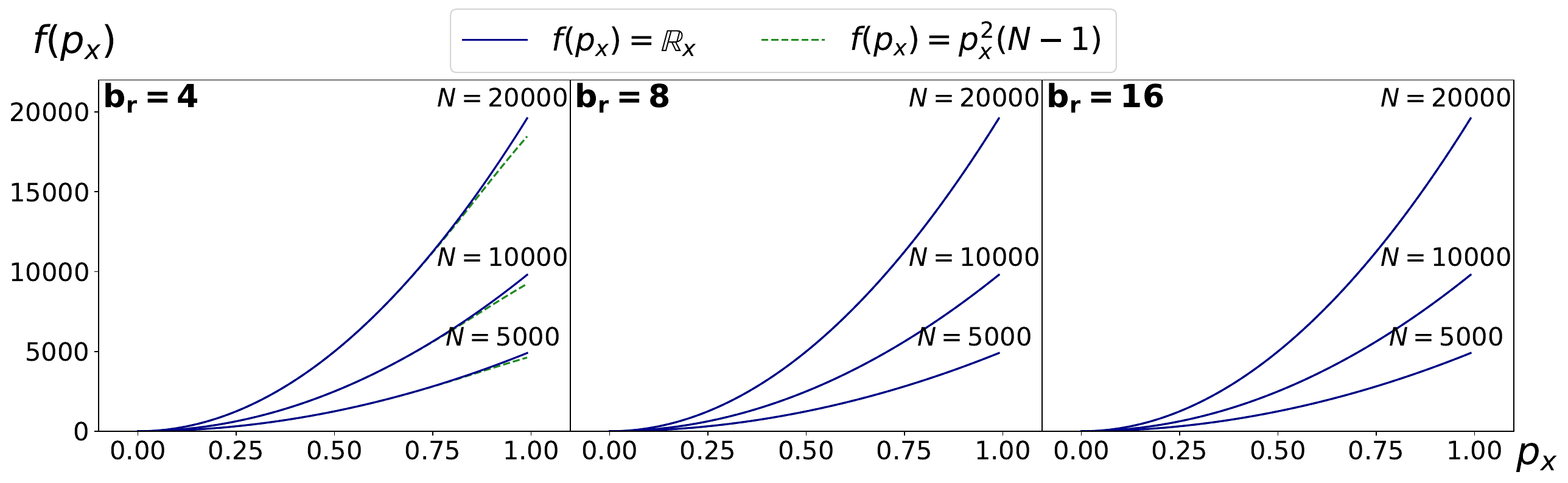}
    \caption{Comparisons between $\mathbb{R}_x$ and $p_x^2(N-1)$ under different configurations.}
    \label{fig:func_compare}
\end{figure}
As for $\epsilon_2$ and $\epsilon_3$, combining them yields
\begin{equation}\label{eq:e2+e3}
    \epsilon_2 + \epsilon_3 = \sum_{n=1}^{\infty} (k_\alpha n + k_\beta)p_x ^n - \big[ \sum_{n=1}^{N-1} (k_\alpha n + k_\beta)p_x ^n \lceil \frac{N}{2^{b_r}} \rceil + \sum_{n=N}^{\infty} (k_\alpha n + k_\beta)p_x ^n n \big] 
\end{equation}
Instead of quantifying the complex Eq.~\eqref{eq:e2+e3}, we investigate its impact by directly plotting $\mathbb{R}_x$ and $p_x^2(N-1)$ for comparison. 
As shown in Fig.~\ref{fig:func_compare}, in settings with various $b_r$ and $N$, $\mathbb{R}_x$ and $p_x^2(N-1)$ almost coincide perfectly when $b_r\geq 8$. In the case that $b_r = 4$, their trends still match well, especially when $p_x \leq 0.75$. As for $p_x > 0.75$, i.e., most input symbols are $x$, it is almost certain that RLE serves as a suitable compressor for $x$ and solving Eq.~\eqref{eq:original_condition} is no longer necessary. We thus argue that the minor difference between $\mathbb{R}_x$ and $p_x^2(N-1)$ is negligible in practice. 
Moreover, based on the observation that $\mathbb{R}_x \gtrsim p_x^2(N-1)$, we have
\begin{equation}
    p_x^2 (b_x + b_r) (N-1) \geq N_x \cdot b_r \implies (b_x + b_r) \mathbb{R}_x \geq b_r N_x \implies 
    \text{Eq.~\eqref{eq:original_condition}} \geq 0 \nonumber
\end{equation}
i.e., in terms of guaranteeing that encoding $x$ with RLE will not lead to more considerable storage usage, substituting $\mathbb{R}_x$ with $p_x^2(N-1)$ in Eq.~\eqref{eq:original_condition} will never lead to incorrect decisions. What is even better, this approximation can significantly reduce the time complexity of the computing $\mathbb{R}_x$ from $\mathcal{O}(N \log N)$ to $\mathcal{O}(1)$. 

Finally, when $p_x \in (0,1)$, the threshold of identifying $x$ as a symbol that RLE may encode can be derived as
\begin{align}
    p_x^2 (b_x + b_r)(N-1)  \geq b_r N_x  & \iff (b_x + b_r) (N-1)(\frac{N_x}{N})^2 \geq b_r N_x \nonumber \\ &\iff N_x \geq \frac{b_r}{b_x + b_r} \cdot \frac{N^2}{N-1} \nonumber
\end{align}
which, given $N$ is sufficiently large in practice, can be simplified as
\begin{equation}\label{eq:final}
    N_x \geq \frac{b_r}{b_x + b_r} \cdot N \iff p_x \geq \frac{b_r}{b_x + b_r} 
\end{equation}
Remarkably, since Eq.~\eqref{eq:final} always holds when $N_x=N$, it also works when $p_x=1$.

\section{Algorithm}
\paragraph{Compression.} In \cref{sec:math}, we identified the criterion to assess the suitability of applying RLE on specific symbols. Based on this, we propose compressing only a subset of the input sequence while leaving other symbols unaffected. To be exact,
\begin{itemize}[noitemsep, topsep=0pt, partopsep=0pt, label={$\bullet$}, leftmargin=10pt]
    \item \textbf{Step 1: Constructing set $\mathcal{G}$.} We add $x$ to $\mathcal{G}$ (a set) if Eq.~\eqref{eq:final} holds, so that $\mathcal{G}$ contains all symbols suitable to RLE. In case the symbols' frequencies are unknown, we randomly sample elements from the input to infer the distribution. 
    \item \textbf{Step 2: Encoding.} Scanning the entire input sequence, if the occurred symbol can be found in $\mathcal{G}$, our algorithm will store it in the encoded variable and the count of its consecutive repeats in the run-control variable, i.e., identical to how the vanilla RLE behaves. Otherwise, our approach will directly append this symbol to the encoded variable, regardless of its successor. For instance, with $\mathcal{G}=\{\texttt{\textcolor{ForestGreen}{a}}, \texttt{\textcolor{Blue}{b}}\}$, \texttt{<\textcolor{ForestGreen}{a}, \textcolor{Blue}{b}, \textcolor{Blue}{b}, \textcolor{Blue}{b}, \textcolor{ForestGreen}{a}, \textcolor{ForestGreen}{a}, \textcolor{red}{c}, \textcolor{red}{c}, ...>} (the same example sequence in \cref{sec:intro}) will be rendered as \texttt{<\textcolor{ForestGreen}{a}, \textcolor{Blue}{b}, \textcolor{ForestGreen}{a}, \textcolor{red}{c}, \textcolor{red}{c}, ...>} and \texttt{<\textcolor{ForestGreen}{1}, \textcolor{Blue}{3}, \textcolor{ForestGreen}{2}, ...>}. 
\end{itemize}

\vspace{-9pt}
\paragraph{Decompression.} 
Our algorithm scans the encoded variable. If a symbol belonging to $\mathcal{G}$ occurs, it will be repeated for $r$ times in the reconstructed sequence, where $r$ denotes the first \textit{unvisited} element in the run-control variable. Otherwise, this symbol will be directly appended to the reconstructed sequence without repetition.

\vspace{-9pt}
\paragraph{Discussion.} As $\mathcal{G}$ only contains sufficiently frequent symbols, its cardinality (which is normally below a dozen in practice), as well as size, can be neglected. Therefore, as long as the input sequence satisfies the i.i.d. assumption, our scheme can lead to no worse compression than the vanilla RLE. More importantly, in theory, the output of our algorithm is guaranteed to be no larger in size than the input.

In practice, increasing the sample size at the compression stage may lead to a better estimation on the distribution of the entire input sequence, while consuming more computational resources. When the sequence is long and segmentation is needed before compression, sampling should be done on each segment. In any case, obtaining the input distribution needs at most one full pass, so the time complexity of constructing $\mathcal{G}$ is $\mathcal{O}(N)$. Because the encoding process also takes $\mathcal{O}(N)$, the overall compression time complexity is still $\mathcal{O}(N)$. Similarly, the decompression time complexity is $\mathcal{O}(N)$. To sum up, our algorithm substantially enhances the compression performance of the vanilla RLE, with the efficiency advantage retained. 


\section{Experiments}

\begin{figure}
    \centering
    \includegraphics[width=\textwidth]{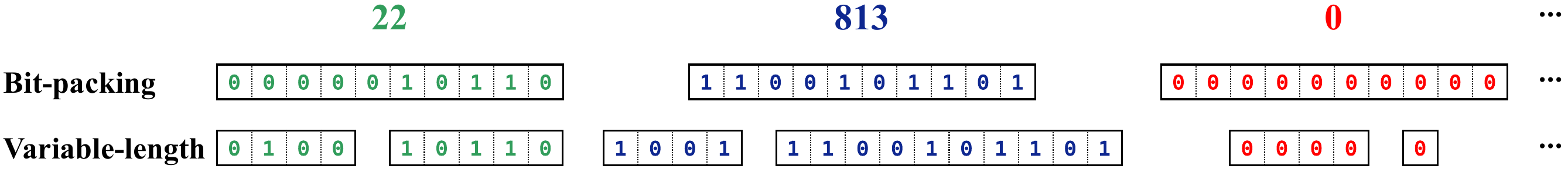}
    \caption{Examples of the two binary representation schemes we consider in \cref{ssec:exp-setup} (all elements are less than $2^{10}$). NB: for each variable-length code, the components on the left and right are $c_\alpha$ and $c_\beta$, respectively.}
    \label{fig:binary}
\end{figure}

\subsection{Setups}\label{ssec:exp-setup}
\paragraph{Implementation details.} The lower $b_r$ is, the more likely Eq.~\eqref{eq:final} holds, i.e., the stronger RLE's compression power on $x$ tends to be. Therefore, by default, we set $b_r$ at 4, the minimum configuration widely used in literature~\cite{images,daily2010data,KLEIN2014326}. When dealing with each input sequence, we randomly sample 10K elements to generate $\mathcal{G}$. As for the binary representation of symbols, we consider two widely adopted strategies, as illustrated in Fig.~\ref{fig:binary}:
\begin{itemize}[noitemsep, topsep=0pt, partopsep=0pt, label={$\bullet$}, leftmargin=10pt]
    \item \textbf{Bit-packing}~\cite{bit-packing}: Each symbol is represented by $b$ bits, with $b$ being the minimum number of bits required to encode any symbol.
    \item \textbf{Variable-length}~\cite{variable-length}: Each symbol is encoded into two components, namely $c_\alpha$ and $c_\beta$. $c_\alpha$ stores the width of $c_\beta$ (empirically we fix the width of $c_\alpha$ at 4 bits). $c_\beta$ represents the symbol's actual value, using the fewest number of bits necessary. 
\end{itemize}

\vspace{-9pt}
\paragraph{Testbeds.}
We utilise the following resources, which cover three distinct modalities:
\begin{itemize}[noitemsep, topsep=0pt, partopsep=0pt, label={$\bullet$}, leftmargin=10pt]
    \item \textbf{SO-2018\footnote{\url{https://insights.stackoverflow.com/survey/2018}}} stores the responses in a survey conducted in 2018 by Stack Overflow, a famous online developer community. This tabular archive contains answers to 128 queries (such as personal background, skill set, and job preferences) in the questionnaire, which are processed as 128 input sequences in our experiment. Since the 0.1 million rows (each corresponds to one respondent) have been randomly permuted for anonymisation purposes, this dataset can be regarded as a prime example of (approximately) i.i.d. sequences in the real world. 
    \item \textbf{Netstream} is a private dataset held by Huawei. It comprises traffic monitoring logs collected via our internal telecommunication devices. We focus on 5 landmark entries, each of which includes over 17 million records. Please be aware that this large-scale time-series dataset does not satisfy the i.i.d. assumption.
    \item \textbf{``Lena''\footnote{\url{http://www.lenna.org/lena_std.tif}}} is the premier standard test image in the field of data compression. To flatten it, for simplicity, we adopt a workflow motivated by the compression procedure of the BMP format~\cite{images}. To be concrete, we first convert the original image to a bitmap and then concatenate all pixel lines. As a result, ``Lena'' (originally boasts a resolution of 512$\times$512) is transformed into a sequence with 262,144 elements.
\end{itemize}

Remarkably, many symbols in SO-2018 and Netstream are long strings. Following
~\cite{Sayood1996IntroductionTD,gandolph2010method,iid-2023}, within every sequence, we replace each symbol with an integer identifier based on its order of appearance (starting from 0 and incremented cumulatively).

\vspace{-9pt}
\paragraph{Studied methods.} Besides the proposed algorithm (abbreviated as \textbf{\textsf{Ours}}), we additionally benchmark two baselines: \textbf{\textsf{V-RLE}} (aka the vanilla RLE), which handles all symbols with RLE; \mbox{\textbf{\textsf{D-RLE}}}, which only encodes the most
dominant symbol with RLE (mentioned in \cref{sec:intro}; it can be regarded as a special case of \textsf{Ours}, where $\mathcal{G}$ always has the most
dominant symbol as its only item).

\begin{figure}
     \centering
     \includegraphics[width=0.495\textwidth]{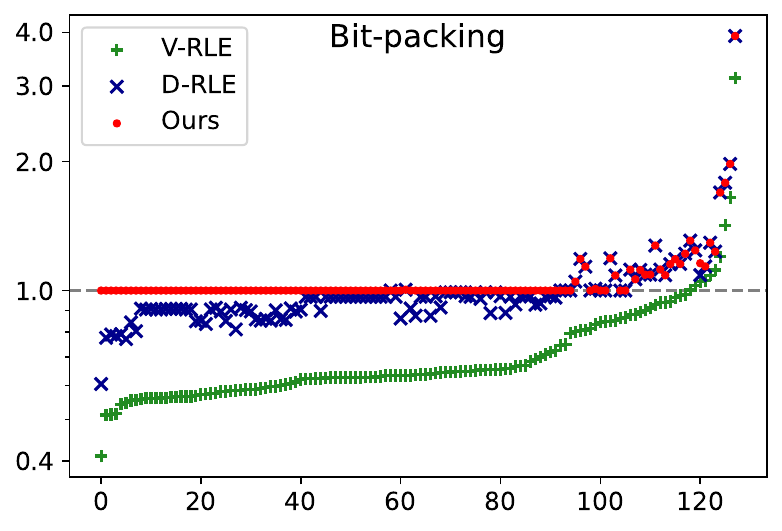}
\includegraphics[width=0.495\textwidth]{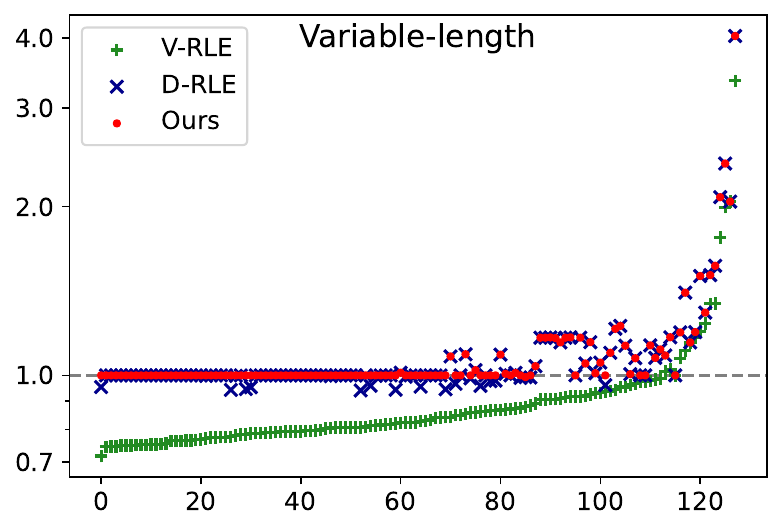}
     \caption{Column-wise log-scaled Compression Ratio (CR, the higher the better) on the SO-2018 dataset. Data points with the same X coordinate correspond to the same column (for visibility, the 128 columns are sorted according to the CR of \textsf{V-RLE}).}
     \label{fig:so-2018}
\end{figure}
\setlength{\tabcolsep}{0.1cm}
\begin{table}[!t]\footnotesize
\centering
\begin{tabular}{p{1.4cm}?>
{\lcell}p{0.8cm}>{\rcell}p{1.05cm}|>{\lcell}p{0.8cm}>{\rcell}p{1.05cm}|>{\lcell}p{0.8cm}>{\rcell}p{1.05cm}?>{\lcell}p{0.8cm}>{\rcell}p{1.05cm}|>{\lcell}p{0.8cm}>{\rcell}p{1.05cm}|>{\lcell}p{0.8cm}>{\rcell}p{1.05cm}}
\toprule 
  & \multicolumn{6}{c?}{Bit-packing} & \multicolumn{6}{c}{Variable-length}\\ \cmidrule{2-13}
 & \multicolumn{2}{c|}{\textsf{V-RLE}}&  \multicolumn{2}{c|}{\textsf{D-RLE}} & \multicolumn{2}{c?}{\textsf{Ours}} & \multicolumn{2}{c|}{\textsf{V-RLE}} & \multicolumn{2}{c|}{\textsf{D-RLE}} & \multicolumn{2}{c}{\textsf{Ours}} \\ \toprule
 \texttt{Src\_IP} & 409.5 &[-68.3] & 364.6 & [-23.4] & \bftab 341.2 &[0.0] & 426.5 & [-41.7] & 423.8 &[-39.0] & \bftab 384.8 &[0.0]\\ \cmidrule{1-13}
\texttt{Dst\_IP} & 334.9 &[-10.1] & 308.4 &[16.4] & \bftab 303.5 & [21.3] & 428.5 &[-76.5] & 421.6 & [-69.6] & \bftab 352.0 & [0.0]\\ \cmidrule{1-13}
\texttt{Src\_Port} & 282.3 &[-9.3] & 280.6 &[-7.6] & \bftab 273.0 & [0.0] & 315.4 & [-25.9] & 304.2 &[-14.7] & \bftab 289.5 &[0.0]\\ \cmidrule{1-13}
\texttt{Dst\_Port} & 227.0 &[-36.0] & 205.8 &[-14.8] & \bftab 191.0 &[0.0] & 216.2 &[14.6] & 213.6 &[17.2] & \bftab 198.0 &[32.8]\\ \cmidrule{1-13}
\texttt{Protocol} & \textcolor{white}{0}66.2 &[10.1] & \textcolor{white}{0}65.5 &[10.8] & \bftab  \textcolor{white}{0}63.2 &[13.1] & \textcolor{white}{0}68.0 &[3.3] & \textcolor{white}{0}61.6 &[6.4] &  \bftab\textcolor{white}{0}60.7 &[7.3]\\
                           \bottomrule
\end{tabular}
\caption{Entry-wise size (MB) of Netstream after encoding. In both this table and Tab.~\ref{tab:lena}, we highlight the smallest size of each setting in {\bftab bold}, and additionally calculate the \textit{reduced} size (see the square bracket on the right of every cell).}\label{tab:netstream}
\end{table}

\subsection{Results}

\paragraph{SO-2018.} To provide clear and unified results on all the 128 columns and both binary representation settings, as demonstrated in Fig.~\ref{fig:so-2018}, we report the Compression Ratio (CR), i.e., $\frac{\text{input size}}{\text{output size}}$ (the higher, the better). Overall, we observe that despite achieving successful compression sometimes (see data points on the right of each sub-figure), \textsf{V-RLE} struggles on most columns. This is because most columns of SO-2018 have high cardinality, where even the most dominant value has a low frequency. Implied from \cref{sec:math}, under such a distribution, the number of occurrences of long runs is low, and \textsf{V-RLE} is thus unsuitable. As for the existing workaround, \textsf{D-RLE}, it does address this challenge to some degree: on most columns, it yields higher CR than \textsf{V-RLE}. Nevertheless, we still see many columns where the output of \textsf{D-RLE} is undesirably larger than the input in size. In contrast, the CR of \textsf{Ours} is not only on par with or above that of either counterpart on \textit{all} columns, but also \textit{always} equal to or higher than 1.0. In other words, our experiments confirm that \textsf{Ours}, which substantially outperforms \textsf{V-RLE} and \textsf{D-RLE}, consistently avoids the size inflation issue when the i.i.d. assumption is (approximately) satisfied.

\vspace{-14pt}
\paragraph{Netstream.} Although in \cref{sec:math} we assume the input to be i.i.d., on non-i.i.d. data such as Netstream, we still find our method effective.  As displayed in Tab.~\ref{tab:netstream}, \textsf{Ours} consistently compresses the time-series to the smallest size. Also, it is the only approach that never requires more bits for the output than the input. In comparison, on all the 10 examined settings, \textsf{V-RLE} and \textsf{D-RLE} end up with negative storage space reduction for 7 and 6 times, respectively. 


\newcommand{\shiftleft}[2]{\makebox[0pt][r]{\makebox[#1][l]{#2}}}

\setlength{\tabcolsep}{0.1cm}
\begin{table}[!t]\footnotesize
\centering
\begin{tabular}{p{2.5cm}?>{\ccell}p{1.9cm}|>{\ccell}p{2.7cm}|>{\ccell}p{0.5cm}?>{\lcell}p{0.8cm}>{\rcell}p{1.2cm}|>{\lcell}p{0.8cm}>{\rcell}p{1.cm}|>{\lcell}p{0.8cm}>{\rcell}p{1.cm}}
\toprule 
 \multirow{6}{*}{\shiftleft{0.1cm}{
 \raisebox{-0.9\totalheight}{\includegraphics[width=2.5cm]{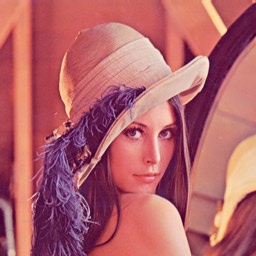}}
 }}
& \# of colours& $b_x$ (colour depth) & $b_r$ &  \multicolumn{2}{c|}{\textsf{V-RLE}}&  \multicolumn{2}{c|}{\textsf{D-RLE}} & \multicolumn{2}{c}{\textsf{Ours}} \\ \cmidrule[1pt]{2-10}
& 16 & 4 & 4 & 238.1 & [-110.1] & 135.2 & [-7.2] & \bftab 128.0 & [0.0] \\ \cmidrule{2-10}
& 16 & 8 & 4 & 363.4 & [-107.4] & 257.3 & [-1.3] & \bftab 256.0 & [0.0] \\ \cmidrule{2-10}
& 256 & 8 & 8 & 484.5 & [-228.5] & 258.8 & [-2.8] & \bftab 256.0 & [0.0] \\ \cmidrule{2-10}
& 65,536 & 16 & 4 & 635.8 & [-123.8] & 512.1 & [-0.1] & \bftab 512.0 & [0.0] \\ \cmidrule{2-10}
& 65,536 & 16 & 8 & 762.9 & [-250.9] & 512.2 & [-0.2] & \bftab 512.0 & [0.0] \\ 
\bottomrule
\end{tabular}
\caption{Size (KB) of ``Lena'' after encoding.}\label{tab:lena}
\end{table}

\vspace{-14pt}
\paragraph{``Lena''.} We exploit three popular bitmap palette configurations, which respectively require 4, 8, and 16 bits (aka \textit{colour depth}) for each pixel, so as to support 16, 256, and 65,536 colours. Following the BMP standard~\cite{images}, we fix the binary representation at bit-packing, and consider $b_r \in \{4, 8\}$ with $b_r \leq b_x$ for RLE. Results in Tab.~\ref{tab:lena} re-verify the superiority of \textsf{Ours} when compressing non-i.i.d. input. To be concrete, \textsf{V-RLE} increases the storage space in all setups by large margins. While this problem tends to be less severe on \textsf{D-RLE}, we still observe size inflation, especially on smaller $b_x$. \textsf{Ours}, on the contrary, identifies no suitable symbol for $\mathcal{G}$, so that RLE is not actually applied. Therefore, although \textsf{Ours} yields zero compression effect on ``Lena'', it guarantees that the size does not explode, which is critical in practice.

\section{Conclusion and Future Work}
This paper concerns the long-standing size inflation challenge of RLE. We first derive an elegant equation to automatically identify symbols suitable to RLE, based on which we develop an algorithm that selectively handles part of the input sequence with RLE. Next, on real-world tabular, time-series, and image testbeds, we empirically validate that (1) the proposed method constantly avoids size inflation, and (2) it is substantially superior to existing RLE techniques.

While our experimental results on non-i.i.d. datasets are promising, the theoretical analysis presented in \cref{sec:math} is restricted to the i.i.d. context. In the future, we will extend our theory to more complex distributions, as well as testify the proposed algorithm on more diverse applications.

\Section{References}
\bibliographystyle{IEEEbib}
\bibliography{main}

\appendix

\Section{Appendix} 
\begin{lemma}\label{lemma-1}\normalfont
\begin{equation}
    \sum_{n=0}^\infty a^n= \frac{1}{1-a} \hspace{2em} \text{subject to} \hspace{2em} a \in (0, 1) \nonumber
\end{equation}
\end{lemma}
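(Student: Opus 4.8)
The plan is to reduce the infinite sum to a limit of partial sums and evaluate it by the classical telescoping trick. First I would fix $a \in (0,1)$ and introduce the $N$-th partial sum $S_N = \sum_{n=0}^N a^n$; by definition the claimed identity is precisely the statement that $\lim_{N\to\infty} S_N = \frac{1}{1-a}$, so it suffices to work with $S_N$ and then pass to the limit.

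Next I would derive a closed form for $S_N$. Multiplying through by $a$ gives $a S_N = \sum_{n=1}^{N+1} a^n$, so that $(1-a) S_N = S_N - a S_N = a^0 - a^{N+1} = 1 - a^{N+1}$. Since $a \neq 1$ we may divide, obtaining $S_N = \frac{1 - a^{N+1}}{1-a}$, a finite algebraic identity valid for every $N$.

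The only remaining point is to justify that $a^{N+1} \to 0$ as $N \to \infty$. I would argue this from monotonicity: because $0 < a < 1$, the sequence $(a^N)_N$ is strictly decreasing and bounded below by $0$, hence converges to some $L \ge 0$; taking limits in $a^{N+1} = a \cdot a^N$ yields $L = aL$, and since $a \neq 1$ this forces $L = 0$. (Alternatively, writing $a = 1/(1+t)$ with $t > 0$ and invoking Bernoulli's inequality $(1+t)^N \ge 1 + Nt$ gives $0 < a^N \le \frac{1}{1+Nt} \to 0$.) Substituting $a^{N+1} \to 0$ into the closed form gives $S_N \to \frac{1}{1-a}$, which is the assertion.

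There is essentially no obstacle here: the result is elementary and the argument is a finite algebraic manipulation followed by one limit. The only place demanding a little care is making the step $a^{N+1} \to 0$ genuinely rigorous rather than treating it as obvious, and the monotone-convergence (or Bernoulli) argument above settles that; everything else is routine.
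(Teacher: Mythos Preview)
Your proposal is correct and follows essentially the same route as the paper: compute the finite geometric partial sum in closed form and then let $N\to\infty$, using $a^N\to 0$ for $a\in(0,1)$. Your version is simply more careful about justifying that last limit, which the paper treats as immediate.
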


\begin{proof}
By summing up a Geometric Progression and finding its limit, for $a \in (0, 1)$, we can show that
\begin{align}
    &\sum_{n=0}^N a^n= \frac{a^0 (1 - a^N)}{1-a} \implies \sum_{n=0}^\infty a^n = \lim_{N \to \infty} \frac{a^0 (1 - a^N)}{1-a} = \frac{1}{1-a} 
    \qedhere
\end{align}
\end{proof}

\begin{lemma}\label{lemma-2}\normalfont
\begin{equation}
    \sum_{n=0}^\infty a^n n= \frac{a}{(1-a)^2} \hspace{2em} \text{subject to} \hspace{2em} a \in (0, 1) \nonumber
\end{equation}
\end{lemma}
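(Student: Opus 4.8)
The plan is to prove the closed form $\sum_{n=0}^\infty a^n n = \frac{a}{(1-a)^2}$ for $a \in (0,1)$ by reducing it to Lemma~\ref{lemma-1}. The cleanest route is a term-by-term manipulation that does not require differentiating a power series (which would demand a uniform-convergence justification). First I would observe that the $n=0$ term vanishes, so $\sum_{n=0}^\infty a^n n = \sum_{n=1}^\infty a^n n$. The key step is the index shift: writing $n = (n-1) + 1$ and reindexing with $m = n-1$, one gets $\sum_{n=1}^\infty a^n n = \sum_{m=0}^\infty a^{m+1}(m+1) = a\sum_{m=0}^\infty a^m m + a\sum_{m=0}^\infty a^m$. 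Denoting $S = \sum_{m=0}^\infty a^m m$, this is the self-referential identity $S = aS + a\cdot\frac{1}{1-a}$, where I have substituted Lemma~\ref{lemma-1} for the geometric sum. Solving, $(1-a)S = \frac{a}{1-a}$, hence $S = \frac{a}{(1-a)^2}$, as claimed.

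The one subtlety worth flagging is the legitimacy of splitting the infinite sum into $a\sum a^m m + a\sum a^m$ and of rearranging to isolate $S$: this requires knowing \emph{a priori} that $S$ converges (and is finite) so that the algebra $S = aS + \cdots$ is not $\infty = \infty$. That is easily dispatched: since $a \in (0,1)$, the ratio test (or comparison with a geometric series, as $a^{n}n \le a^{n/2}$ for all large $n$) shows $\sum a^n n$ converges absolutely, so all regroupings are valid. This absolute-convergence remark is really the only ``obstacle,'' and it is a routine one; the algebraic identity itself is immediate once Lemma~\ref{lemma-1} is in hand.

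Concretely, the proof would read along the following lines.

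\begin{proof}
Since $a \in (0,1)$, the series $\sum_{n=0}^\infty a^n n$ converges absolutely (e.g.\ by the ratio test), so we may freely regroup its terms. The $n=0$ term is $0$, and substituting $n \mapsto n+1$,
\begin{align}
    \sum_{n=0}^\infty a^n n = \sum_{n=1}^\infty a^n n = \sum_{n=0}^\infty a^{n+1}(n+1) = a\sum_{n=0}^\infty a^n n + a\sum_{n=0}^\infty a^n. \nonumber
\end{align}
Writing $S = \sum_{n=0}^\infty a^n n$ and invoking Lemma~\ref{lemma-1} for the second sum, this reads $S = aS + \frac{a}{1-a}$, whence $(1-a)S = \frac{a}{1-a}$ and therefore $S = \frac{a}{(1-a)^2}$.
\end{proof}

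An alternative I would mention only in passing is to differentiate the identity of Lemma~\ref{lemma-1} with respect to $a$, giving $\sum_{n=1}^\infty n a^{n-1} = \frac{1}{(1-a)^2}$ directly; but since this needs a theorem on differentiating power series inside the radius of convergence, the elementary index-shift argument above is preferable in keeping with the style of Lemma~\ref{lemma-1}'s own proof, and it is the version I would include.
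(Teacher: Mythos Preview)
Your proof is correct, but it takes a genuinely different route from the paper's. The paper does precisely what you relegate to a passing remark at the end: it differentiates the identity of Lemma~\ref{lemma-1} term by term to get $\sum_{n=0}^\infty n a^{n-1} = \frac{1}{(1-a)^2}$ and then multiplies through by $a$. Your index-shift/self-referential argument $S = aS + \frac{a}{1-a}$ is more elementary in that it never invokes differentiation of a power series, and you correctly isolate and dispatch the one genuine subtlety (that $S$ must be known finite before the algebra is legitimate). On the other hand, the paper's differentiation approach is slightly more economical to state and has the structural advantage that it iterates immediately: differentiating once more gives Lemma~\ref{lemma-3} with no new idea, whereas your method would require setting up a fresh self-referential identity for $\sum a^n n^2$. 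Either argument is entirely adequate for the lemma at hand.
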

\begin{proof}
    It can be justified by respectively calculating the derivatives of both sides of the equation in Lemma~\ref{lemma-1}, as
    \begin{align}
        & \frac{\mathrm{d}}{\mathrm{d} a} \sum_{n=0}^\infty a^n= \frac{\mathrm{d}}{\mathrm{d} a} \frac{1}{1-a}  \iff
    \sum_{n=0}^\infty a^{n-1} n = \frac{1}{(1-a)^2} \iff \sum_{n=1}^\infty a^{n} n = \frac{a}{(1-a)^2}  \qedhere 
    \end{align}
\end{proof}

\begin{lemma}\label{lemma-3}\normalfont
\begin{equation}
    \sum_{n=0}^\infty a^n n^2= \frac{a^2 + a}{(1-a)^3} \hspace{2em} \text{subject to} \hspace{2em} a \in (0, 1) \nonumber
\end{equation}
\end{lemma}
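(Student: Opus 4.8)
The plan is to extend the differentiation ladder already used for Lemmas~\ref{lemma-1} and~\ref{lemma-2}: each successive identity is obtained by differentiating the preceding one term by term. For $a\in(0,1)$ the series $\sum_{n\ge 0}a^n$, $\sum_{n\ge 0}n a^n$, and $\sum_{n\ge 0}n^2 a^n$ all converge, and since $a$ lies strictly inside the radius of convergence (which is $1$), term-by-term differentiation is legitimate by the standard theorem on power series; I would cite that fact rather than re-establish it.

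Concretely, I would start from Lemma~\ref{lemma-2}, namely $\sum_{n=0}^\infty n a^n = \frac{a}{(1-a)^2}$, and differentiate both sides with respect to $a$. The left-hand side becomes $\sum_{n=0}^\infty n^2 a^{n-1}$, since the $n=0$ term contributes nothing. For the right-hand side I would apply the quotient rule to $\frac{a}{(1-a)^2}$: the numerator is $(1-a)^2 + 2a(1-a)$, and cancelling a common factor of $(1-a)$ reduces it to $(1-a)+2a = 1+a$, leaving $\frac{1+a}{(1-a)^3}$. Equating the two sides and multiplying through by $a$ then yields $\sum_{n=0}^\infty n^2 a^n = \frac{a(1+a)}{(1-a)^3} = \frac{a^2+a}{(1-a)^3}$, which is the claimed identity; note the $n=0$ term vanishes, so the sum may equivalently be started at $n=1$.

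The main (and essentially the only) obstacle is justifying the interchange of $\frac{\mathrm{d}}{\mathrm{d}a}$ with the infinite sum. This is precisely the classical result that a power series may be differentiated term by term within its radius of convergence, so for $a\in(0,1)$ nothing beyond invoking it is required. Everything else is a single application of the quotient rule followed by an elementary simplification, entirely routine.
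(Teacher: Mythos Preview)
Your proposal is correct and follows essentially the same route as the paper: differentiate Lemma~\ref{lemma-2} term by term, simplify the quotient-rule derivative of $\frac{a}{(1-a)^2}$ to $\frac{1+a}{(1-a)^3}$, and multiply through by $a$. Your explicit invocation of the term-by-term differentiation theorem for power series is a welcome bit of rigor that the paper leaves implicit.
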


\begin{proof}
Similar to the previous proof, we simply differentiate both sides of the equation in Lemma~\ref{lemma-2},  yielding
\begin{align}\label{eq:lemma-2-middle}
    \frac{\mathrm{d}}{\mathrm{d} a} \sum_{n=0}^\infty a^n n= \frac{\mathrm{d}}{\mathrm{d} a} \frac{a}{(1-a)^2} & \iff \sum_{n=0}^\infty a^{n-1} n^2 = \frac{(1-a)^2+2a(1-a)}{(1-a)^4}= \frac{a+1}{(1-a)^3} \nonumber \\
    & \iff \sum_{n=0}^\infty a^n n^2= \frac{a^2 + a}{(1-a)^3} \qedhere
\end{align}
\end{proof}

\end{document}